\documentclass[a4paper, USenglish, cleveref, autoref, thm-restate, pdfa]{lipics-v2021}

\hideLIPIcs%

\title{Min-1-Planarity is NP-Hard} %

\author{Yuto Okada}
{Nagoya University, Japan, JSPS Research Fellow \and \url{https://yutookada.com/en/}}
{research@yutookada.com}
{https://orcid.org/0000-0002-1156-0383}
{}

\authorrunning{Y. Okada} %

\Copyright{Yuto Okada} %

\ccsdesc[300]{Theory of computation~Computational geometry} %

\keywords{Graph drawing, beyond planarity, min-1-planarity} %

\category{} %

\relatedversion{} %

\funding{Supported by JST SPRING Grant Number JPMJSP2125 and JSPS KAKENHI Grant Number JP26KJ1299.}%

\acknowledgements{I would like to thank Oksana Firman, Marie Diana Sieper, and Alexander Wolff for a discussion about this problem when I visited Universit\"at W\"urzburg in 2023.}%

\nolinenumbers %

\EventEditors{John Q. Open and Joan R. Access}
\EventNoEds{2}
\EventLongTitle{42nd Conference on Very Important Topics (CVIT 2016)}
\EventShortTitle{CVIT 2016}
\EventAcronym{CVIT}
\EventYear{2016}
\EventDate{December 24--27, 2016}
\EventLocation{Little Whinging, United Kingdom}
\EventLogo{}
\SeriesVolume{42}
\ArticleNo{23}

\usepackage{todonotes}

\newcommand{\ThreePartition}{\textsc{3-Partition}}
\newcommand{\MinOnePlanarity}{\textsc{Min-1-Planarity}}

\definecolor{dark blue}{rgb}{0.121,0.47,0.705}
\let\emph\relax\DeclareTextFontCommand{\emph}{\color{dark blue}\em}

\begin{document}

\maketitle

\begin{abstract}
    In this paper, we show that it is NP-hard to determine whether a given graph admits a min-1-planar drawing.
    A drawing of a graph is min-$k$-planar if, for every crossing in the drawing, at least one of the two crossing edges involves at most $k$ crossings.
    This notion of min-$k$-planarity was introduced by Binucci, B{\"{u}}ngener, Di Battista, Didimo, Dujmovi{\'c}, Hong, Kaufmann, Liotta, Morin, and Tappini~[GD 2023; JGAA, 2024] as a generalization of $k$-planarity.
\end{abstract}

\section{Introduction}\label{sec:introduction}

In recent years, the study of beyond-planar graph classes has been one of the most prominent topics in the graph drawing community.
One of the most established such classes is the class of \emph{1-planar graphs}.
A graph is said to be 1-planar if it can be drawn in the plane so that every edge involves at most one crossing.
Such a drawing is called a \emph{1-planar drawing}.
Already in 1965, Ringel~\cite{ringel1965sechsfarbenproblem} studied the chromatic number of 1-planar graphs, and since then there have been many results on 1-planar graphs.
These include, for instance, the NP-hardness of the recognition problem~\cite{DBLP:journals/algorithmica/GrigorievB07,DBLP:conf/gd/KorzhikM08,DBLP:journals/jgt/KorzhikM13}, tight edge-density bound~\cite{DBLP:conf/gd/PachT96,DBLP:journals/combinatorica/PachT97}, and the study of \emph{optimal 1-planar graphs}~\cite{DBLP:journals/algorithmica/Brandenburg18}, namely 1-planar graphs whose edge density matches the upper bound.
The class is naturally extended to \emph{$k$-planar graphs} for $k \geq 1$, where every edge can involve at most $k$ crossings.

This paper concerns a generalization of $k$-planar graphs that is called \emph{min-$k$-planar graphs}.
A graph is said to be min-$k$-planar if it can be drawn in the plane so that, for every crossing between two edges, one of the two edges involves at most $k$ crossings.
Such a drawing is called a \emph{min-$k$-planar drawing}.
A $k$-planar drawing is clearly min-$k$-planar, while the other direction does not hold, as a min-$k$-planar drawing can have an edge involving an unbounded number of crossings.

This notion of min-$k$-planarity was introduced\footnote{It is worth mentioning that, although the terminology has only recently been introduced, Wood and Telle~\cite{DBLP:conf/gd/WoodT06,wt-pdcng-NYJM07} already implicitly studied (circular) min-$k$-planar drawings earlier.} at GD 2023 by Binucci, B{\"{u}}ngener, Di Battista, Didimo, Dujmovi{\'c}, Hong, Kaufmann, Liotta, Morin, and Tappini~\cite{DBLP:conf/gd/BinucciBBDDHKLMT23,DBLP:journals/jgaa/BinucciBBDD00LM24}.
They called edges with more than $k$ crossings \emph{heavy edges} and the remaining edges \emph{light edges}, and observed that heavy edges are pairwise noncrossing and light edges form a $k$-planar drawing.
This structure motivates the study of min-$k$-planar graphs, as allowing heavy edges could lower the minimum $k$ compared to $k$-planar drawings, while benefiting from the low visual complexity of a $k$-planar drawing with pairwise noncrossing heavy edges.

In their paper~\cite{DBLP:conf/gd/WoodT06,wt-pdcng-NYJM07}, they gave several edge-density bounds for min-$k$-planar graphs, including tight upper bounds $4n-8$ and $5n-10$ for $k = 1, 2$, respectively.
They also gave min-$k$-planar graphs that are not $k$-planar for each $k \in \{1, 2\}$, while showing that the classes of optimal min-1-planar graphs and optimal 1-planar graphs coincide.
Hlin{\v{e}}n{\'{y}} and K{\"{o}}dm{\"{o}}n~\cite{DBLP:conf/gd/HlinenyK24} continued the study of min-$k$-planar graphs by investigating the relationship between so-called \emph{simple} and \emph{non-simple} min-$k$-planar drawings.
They showed that for every fixed $k \geq 2$ there is a graph that admits a (non-simple) min-$2$-planar drawing but admits no simple min-$k$-planar drawing.

\subparagraph{Our result.}

We continue the study of min-$k$-planar graphs from the algorithmic point of view.
We consider the problem of recognizing min-$k$-planar graphs, whose complexity, to the best of our knowledge, has been open\footnote{In fact, it was mentioned as an open problem in the GD 2023 presentation of~\cite{DBLP:conf/gd/BinucciBBDDHKLMT23}. The slides are available at \url{https://gd2023.ing.unipg.it/talks/01.1.pdf}. (Accessed on June 30, 2026)}, and show that it is already NP-hard for $k = 1$.

\begin{restatable}{theorem}{ThmMain}\label{thm:main}
    \MinOnePlanarity{} is NP-complete.
\end{restatable}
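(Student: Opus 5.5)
Membership in NP and NP-hardness are proved separately; hardness is by a reduction from \ThreePartition{}, as the macro in the preamble suggests. This mirrors the classical reduction of Grigoriev and Bodlaender for 1-planarity.

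\textbf{Membership in NP.} First I would show that a min-1-planar drawing can be assumed to have only polynomially many crossings. Light edges have at most one crossing each, and heavy edges are pairwise noncrossing. Every crossing therefore involves at least one light edge, so the total number of crossings is at most $|E|$.

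A certificate is then the planarization: replace each crossing by a dummy vertex of degree four, and record the order of crossings along each edge. Whether a drawing realizing this certificate exists can be checked in polynomial time by a planarity test together with a local check at each dummy vertex that the two edges really cross there. Checking the min-1 condition on the certificate is trivial.

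\textbf{NP-hardness.} Take a \ThreePartition{} instance with integers $a_1,\dots,a_{3m}$ and bound $B$. It is strongly NP-hard, so the numbers may be encoded in unary. The construction has two parts.

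\begin{itemize}
\item A rigid \emph{frame}: a wheel-like structure with $m$ pockets, each of capacity $B$. I would build it from optimal 1-planar pieces, for instance $K_6$-like or crossed-quadrangle gadgets, because optimal min-1-planar and optimal 1-planar graphs coincide. These pieces are saturated enough that every min-1-planar drawing of them is essentially unique, and no edge can pass through them.
\item One \emph{item gadget} per $a_i$, attached by a path or hinge to a common hub. Its size is proportional to $a_i$, and it must be drawn inside a single pocket.
\end{itemize}

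The capacity constraint works as follows. Each unit of item size forces some number of crossings with the pocket's boundary or interior separator edges. These separator edges have a limited budget: either each crosses at most once and so is light, or, if it becomes heavy, everything crossing it must be light. The budget must be arranged so that the pocket can host total size at most $B$.

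\textbf{Correctness.} The easy direction is to draw the frame canonically and place the three items of each triple into one pocket. The hard direction, and the main obstacle, is rigidity under min-1-planarity. Heavy edges may carry unboundedly many crossings, so unlike the 1-planar case an edge could cross a whole gadget as long as every edge it crosses is light. The gadgets must be designed so this cannot happen.

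I would first try to make every frame edge part of a structure where a crossing in the ``wrong'' place creates two heavy edges crossing each other, which is a contradiction. Concretely, I would replace each frame edge by many parallel paths, or by $K_{2,t}$ or $K_4$ bundles, so that crossing the barrier requires some edge to cross many edges, i.e.\ to be heavy. Those crossed edges would then also be forced heavy by their other crossings. Using the density bound $4n-8$ for tightness arguments, I would prove a lemma that each gadget is drawn in its canonical way up to local choices. From that lemma, the pocket-counting argument recovers a valid 3-partition.
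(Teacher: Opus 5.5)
Your NP-membership argument is correct, and a little cleaner than the paper's: charging every crossing to a light edge in it bounds the number of crossings by $|E|$ without appealing to simple drawings. Put the rotation system of the planarization into the certificate, since a bare planarity test does not verify that the two edges alternate at each dummy vertex.

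The hardness part, however, is an outline whose two load-bearing steps are missing. The first is rigidity. You rightly note that a heavy edge may pass through anything whose edges stay light, but your remedies do not prevent this. A bundle of $t$ parallel $u$--$v$ paths of length two is traversed by a single external edge crossing each path once. That edge is heavy, each path edge it meets has exactly one crossing, and the drawing is min-1-planar; nothing forces those path edges to acquire ``other crossings.''

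Likewise, the density bound $4n-8$ and the coincidence of optimal min-1-planar and optimal 1-planar graphs are statements about graph classes. They do not say that a subdrawing of an optimal piece is canonical, nor how the other edges of a much sparser host graph interact with it. Cutting pockets into the frame also destroys the saturation such arguments rely on.

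Supplying this step is the paper's main technical work: the uncrossable-edge gadget (three $K_4$'s, length-two wires from $u$ and $v$ to every $K_4$ vertex, ten $u$--$v$ paths of length two), and the lemma that every simple min-1-planar drawing contains a $u$--$v$ curve meeting no external edge. Its proof has three parts:
\begin{itemize}
\item a counting argument: if all three $K_4$'s separated $u$ from $v$, the ten paths would need at least $20$ distinct $K_4$-edges as private crossing partners, but there are only $18$;
\item Rafla's classification of simple drawings of $K_6$, which reduces $G[\{u,v\}\cup V(H)]$ to two isomorphism classes;
\item a case analysis using the wires.
\end{itemize}

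The second missing step is capacity. You say the budget ``must be arranged'' so that a pocket holds total size at most $B$. But in a min-1-planar drawing a heavy separator edge has no budget: it can absorb arbitrarily many light crossers. The paper makes each unit of item $x_j$ an $s$--$u_j$--$w$--$t$ path that must cross two barrier edges $\{a_i,a_{i+1}\}$ and $\{b_i,b_{i+1}\}$. Once a pocket holds at least three items, three edge-disjoint such paths make both barriers heavy. The first two edges of every path have then spent their single crossing, so each last edge $\{w,t\}$ must cross the $d$-path from $c_i$ to $c_{i+1}$.

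Doubling each $d$-edge by a parallel path of length two then rules out two last edges crossing the same $d$-edge, since that $d$-edge and both last edges would have at least two crossings. This gives size at most $T$ per pocket. Pockets with fewer than three items are excluded via $x_j > T/4$. Without a concrete gadget and arguments of this kind, the soundness direction of your reduction is not established.
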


We remark that the relationship between optimal 1-planarity and optimal min-1-planarity does not carry the NP-hardness of recognizing 1-planar graphs to ours, as optimal 1-planar graphs can in fact be recognized in linear time~\cite{DBLP:journals/algorithmica/Brandenburg18,DBLP:journals/algorithmica/ChenGP06}.

The high-level idea of our reduction for \cref{thm:main} follows that of Grigoriev and Bodlaender~\cite{DBLP:journals/algorithmica/GrigorievB07}, which was used to show the NP-hardness of recognizing 1-planar graphs.
However, as we mentioned above, a min-1-planar drawing allows an edge to have an unbounded number of crossings, which makes it difficult to build gadgets for showing NP-hardness.
Our contribution is to introduce ideas for handling such edges and incorporate them into the reduction of Grigoriev and Bodlaender~\cite{DBLP:journals/algorithmica/GrigorievB07}.
The main ingredient is a tailored uncrossable-edge gadget, which will be given in \cref{sec:uncrossable-edge}.

\subparagraph{Related results.}

M\"unch and Rutter~\cite{DBLP:conf/gd/MunchR24} gave a meta-theoretic algorithm for crossing minimization problems on beyond-planar drawing classes that can be characterized by (finite) forbidden patterns of a certain type.
Their result implies that, for every fixed $k$, testing whether a given graph admits a (simple) min-$k$-planar drawing with at most $c$ crossings is fixed-parameter tractable with respect to $c$.

There are many beyond-planar graph classes that allow an edge with an unbounded number of crossings.
\emph{Fan-planar graphs} are graphs that can be drawn in the plane so that, for every edge $e$, the edges crossing $e$ share the same endpoint.
Their recognition is known to be NP-complete~\cite{DBLP:journals/tcs/BinucciGDMPST15}, even if the rotation system is specified~\cite{DBLP:conf/gd/BekosCGHK14,DBLP:journals/algorithmica/BekosCGHK17}, while the 2-layer variant can be solved in polynomial time~\cite{DBLP:journals/corr/abs-2508-17349}.
\emph{Quasi-planar graphs} are graphs that can be drawn in the plane so that there are no three pairwise crossing edges.
The complexity of their recognition remains one of the long-standing open problems in the community but it is known that the 2-layer variant is already NP-complete~\cite{DBLP:conf/soda/AngeliniLBFP21}.

\section{Preliminaries}\label{sec:preliminaries}

We follow the standard notation in graph theory (see, for example, the textbook by Diestel~\cite{diestel2025graph}).
In this paper we only consider simple graphs, namely, graphs without a self-loop and parallel edges.
The complete graph on $t$ vertices is denoted by $K_t$.
The \emph{length} of a path is defined as the number of edges on the path.

Let $G$ be a graph and let $\Gamma$ be a drawing of $G$ in the plane.
A drawing $\Gamma$ is \emph{simple} if (1) two edges cross at most once, (2) two adjacent edges do not cross, and (3) three edges do not cross at a single point.
For an edge $e \in E(G)$, let $\mathrm{cr}_{\Gamma}(e)$ be the number of edge crossings that $e$ involves.
For an integer $k$, the drawing $\Gamma$ is a \emph{min-$k$-planar} drawing if, for every edge crossing in $\Gamma$ between two edges $e$ and $f$, either $\mathrm{cr}_{\Gamma}(e) \leq k$ or $\mathrm{cr}_{\Gamma}(f) \leq k$ holds.
A graph $G$ is \emph{min-$k$-planar} if it admits a min-$k$-planar drawing.

The problem \emph{\MinOnePlanarity{}} then asks, given a graph $G$, to determine whether $G$ is a min-1-planar graph.
By the following fact, this is in fact equivalent to asking whether $G$ admits a simple min-1-planar drawing.
\begin{lemma}[Proposition 2.2,~\cite{DBLP:conf/gd/HlinenyK24}]\label{lem:simple-min-1-planar}
    Every min-1-planar graph admits a simple min-1-planar drawing.
\end{lemma}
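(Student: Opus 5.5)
We prove the lemma by starting from an arbitrary min-1-planar drawing and simplifying it with local redrawings. Every redrawing will satisfy one invariant: \emph{no edge gains a crossing, and no new pair of crossing edges is created}. This invariant preserves min-1-planarity. If the set of crossing pairs only shrinks and every $\mathrm{cr}_\Gamma(\cdot)$ only decreases, then an edge that was light stays light. Every remaining crossing pair $\{e,f\}$ was already a crossing pair, so one of its edges was light and still is. Hence it suffices to remove the violations of simplicity one at a time by such operations, each of which strictly decreases the total number of crossings. The process therefore terminates. We may assume throughout that all intersections are proper crossings at finitely many points, as in any drawing.

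\textbf{Step 1: an edge crossing itself.} If an edge crosses itself, delete the closed loop it forms. This removes crossings and creates none, so the invariant holds.

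\textbf{Step 2: two edges crossing more than once.} This situation cannot occur at all in a min-1-planar drawing. If $e$ and $f$ cross at least twice, then $\mathrm{cr}_\Gamma(e)\ge 2$ and $\mathrm{cr}_\Gamma(f)\ge 2$. Then neither crossing edge is light, contradicting min-1-planarity. So no redrawing is needed here.

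\textbf{Step 3: two adjacent edges crossing.} Let $e=uv$ and $f=uw$ cross at a point $x$. By min-1-planarity one of them, say $e$, has $\mathrm{cr}_\Gamma(e)\le 1$. Since $x$ is a crossing of $e$, it is the only one, so the segment $e[u,x]$ of $e$ between $u$ and $x$ meets no other edge.

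Redraw $f$ so that it runs from $u$ closely alongside $e[u,x]$, but stays on the appropriate side so that it does not cross $e$. Near $x$ it then continues as $f[x,w]$. The old piece $f[u,x]$ is discarded.

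The new piece meets nothing, since $e[u,x]$ was uncrossed and we stay in a thin neighbourhood of it. Consequently $f$ loses the crossing at $x$ together with all crossings on $f[u,x]$, and it gains none. The other edges only lose crossings. The invariant therefore holds, and the number of crossings drops by at least one.

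\textbf{Step 4: three or more edges through one point.} Finally, treat a point where three or more edges cross. Perturb the edges in a small disk around that point so that the crossings become pairwise distinct. Each pair that crossed there still crosses exactly once, and no other pairs meet inside the disk. Hence all values $\mathrm{cr}_\Gamma$ and all crossing pairs are unchanged, and the invariant holds trivially.

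After Steps 1--4 the drawing has no self-crossings, no pair of edges crossing twice, no adjacent crossing edges, and no common crossing points. It is therefore simple, and by the invariant it is still min-1-planar.

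The only step that needs real care is Step 3. A naive swap of the two initial segments could make a light edge heavy. The key observation is that min-1-planarity forces one of the two edges to have its segment at $u$ completely uncrossed, and routing along that segment adds no crossings.
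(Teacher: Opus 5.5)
Your proof is correct. The paper gives nothing to compare it against: it does not prove this lemma but imports it as Proposition~2.2 of Hlin{\v{e}}n{\'{y}} and K{\"{o}}dm{\"{o}}n. Your argument is the natural self-contained one:
\begin{itemize}
\item no two edges can cross twice, since both would then be heavy;
\item for adjacent crossing edges, reroute one of them along the uncrossed initial segment of the light one, which adds no crossings;
\item finally, split points where several edges cross.
\end{itemize}

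You should state explicitly the counting convention that Step~4 relies on. A point through which several edges pass must contribute one crossing to $\mathrm{cr}_\Gamma(e)$ for each edge that crosses $e$ there. This matches the paper's phrasing ``edge crossing between two edges $e$ and $f$''. If one instead counted crossing points, Step~4 would fail. Take three otherwise uncrossed edges that pairwise cross at a single point: splitting that point gives each of them two crossings and destroys min-1-planarity. Under the pairwise convention, the claim in Step~3 that $x$ involves only $e$ and $f$ is also immediate.

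One small point: your opening paragraph says every operation strictly decreases the number of crossings, but Step~4 does not. This is harmless, because Step~4 is applied only at the end and creates no new violations.
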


Two drawings are \emph{isomorphic} if there is a homeomorphism of the sphere that maps one drawing to the other.
Two simple drawings are \emph{weakly isomorphic} if they have the same set of pairs of crossing edges.

\section{Uncrossable Edge}\label{sec:uncrossable-edge}

Throughout this paper, by an \emph{uncrossable edge between $u$ and $v$} we mean the subgraph depicted in \cref{fig:uncrossable-edge}.
Formally, for two distinct vertices $u$ and $v$ of a graph, an uncrossable edge between them can be added in the following manner:
\begin{enumerate}
    \item\label{item:add-three-K_4} add three copies of $K_4$ to the graph;
    \item add an edge between $u$ and each of the vertices added in step~\ref{item:add-three-K_4};
    \item\label{item:spokes-u} connect $u$ and each of the vertices added in step~\ref{item:add-three-K_4} with three paths of length 2;
    \item add an edge between $v$ and each of the vertices added in step~\ref{item:add-three-K_4};
    \item\label{item:spokes-v} connect $v$ and each of the vertices added in step~\ref{item:add-three-K_4} with three paths of length 2;
    \item\label{item:add-ten-paths} connect $u$ and $v$ with ten paths of length 2.
\end{enumerate}

\begin{figure}[h]
    \centering
    \includegraphics[page=1]{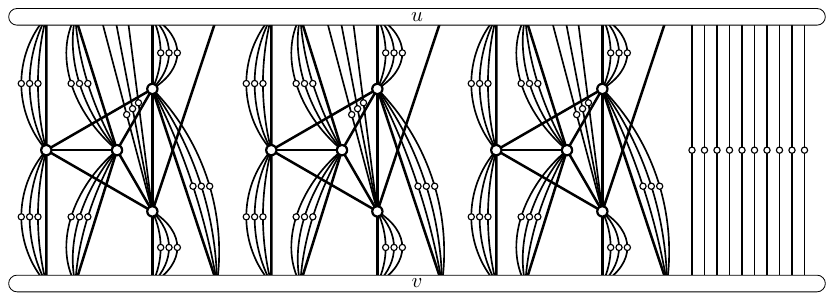}
    \caption{An uncrossable edge between $u$ and $v$ drawn in a min-1-planar manner.}
    \label{fig:uncrossable-edge}
\end{figure}

This section is devoted to showing the following lemma, which will be used in \cref{sec:reduction} to obtain a min-1-planar drawing where each uncrossable edge is essentially crossing-free, namely, not crossed by any other part of the graph.

\begin{lemma}\label{lem:uncrossable-edge}
    Let $G$ be a graph that contains an uncrossable edge between two distinct vertices $u$ and $v$.
    Then, every simple min-1-planar drawing of $G$ contains a curve between $u$ and $v$ that does not cross any edge of $G$ except for the edges of that uncrossable edge.
\end{lemma}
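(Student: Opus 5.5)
Fix a simple min-1-planar drawing $\Gamma$ of $G$ (it exists by \cref{lem:simple-min-1-planar}). Call an edge \emph{heavy} if it has at least two crossings and \emph{light} otherwise. Two heavy edges never cross. The plan is to find, among the ten $u$--$v$ paths of length~2 (together with the $K_4$ spokes if needed), a path $u w v$ whose two edges are crossed only by edges of the uncrossable edge itself. The desired curve is then obtained by following $uw$ and $wv$ and passing through $w$.

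\textbf{Step 1: each $K_4$ copy and its spokes form a rigid block.} Let $Q$ be one of the three $K_4$ copies. Each vertex $q\in Q$ is joined to $u$ by a direct edge and by three paths of length~2, and likewise to $v$. This gives four internally disjoint $u$--$q$ routes for every $q\in Q$. I would argue by counting: the union of a $K_4$ and its spokes to $u$ is too dense for many of its edges to be crossed while staying min-1-planar. So, after choosing one route per vertex of $Q$, some crossing-free or almost crossing-free $u$--$Q$ connections must survive. The intended consequence is that the subdrawing of $Q\cup\{u,v\}$ together with suitable spokes contains a closed curve through $u$ and $v$ that other edges can cross only lightly, so each such crossing uses up the single crossing a light edge is allowed. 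Concretely, I aim to prove the following. Any edge $e$ not belonging to the uncrossable edge that crosses a $u$--$v$ path $uwv$ must cross at least two further edges, which forces $e$ to be heavy. A heavy $e$ then makes each edge it crosses light, and so on.

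\textbf{Step 2: pigeonhole over the ten paths.} Suppose every $u$--$v$ path $uw_iv$, $i=1,\dots,10$, is crossed by some external edge. Consider the cyclic order of the ten paths around $u$. Consecutive paths bound regions, which are lenses between $u$ and $v$. An external edge crossing path $i$ must enter a lens, and it cannot end there unless it is incident to a vertex inside. Using the three $K_4$ blocks, I want to show that at most a constant number of lenses (I expect three, one per block, plus boundary effects) can contain foreign material or be entered cheaply.

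If an external edge $e$ crosses two of the paths, it crosses two edges, so $e$ is heavy. Then both crossed path edges are light, meaning each has exactly that one crossing. An external heavy edge passing through several consecutive lenses would make all the path edges it crosses light and single-crossed. Meanwhile the path edges adjacent to the $K_4$ spokes are crossed by spokes when blocks sit in those lenses. This collision between light path edges and spoke crossings should yield a contradiction. The count ``ten'' is presumably chosen so that after removing the lenses touched by the three blocks and their neighbours, some path remains uncrossed.

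\textbf{Main obstacle.} The hard part is Step 1: making precise how the three $K_4$'s with their triple-subdivided spokes pin down the local structure, given that heavy edges may have unboundedly many crossings. I would first try a case analysis on whether the $K_4$ drawing is planar or has one crossing. By simplicity and min-1-planarity, at most one crossing pair is possible inside a $K_4$ drawing with light edges. I would then show that the three length-2 spokes from each vertex cannot all be crossed, because their edges incident to $u$ are pairwise adjacent and would have to be crossed by distinct heavy edges, and heavy edges are pairwise noncrossing. If the direct argument fails, I would fall back on weak isomorphism classes of small drawings of $K_4$ plus an apex $u$ to enumerate the possible configurations.
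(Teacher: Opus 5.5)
\textbf{There is a genuine gap: the claim your plan rests on is false.} You want one of the ten paths $uw_iv$ to be crossed only by gadget edges, and Step~2 tries to derive a contradiction from all ten being crossed externally. That situation can occur.

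\emph{Counterexample.} Take any min-1-planar drawing of the gadget in which the ten paths are uncrossed parallel arcs in a face containing $u$ and $v$ (for instance, the thin drawing the construction needs anyway for completeness). Add two new vertices $x,y$, one in the region just outside the first arc and one just outside the last arc. Draw $xy$ so that it crosses each $uw_i$ exactly once. This drawing is simple and min-1-planar: $xy$ is heavy and each $uw_i$ has exactly one crossing. Every one of the ten paths is crossed by the external edge $xy$, yet the lemma still holds via a curve through a $K_4$ block. So no pigeonhole or lens argument over the ten paths can work.

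\emph{Other local claims that fail.} An external edge crossing $uw$ need not cross two further edges, since its endpoint may sit in a lens between consecutive paths. The three spokes at $u$ towards a vertex $q$ can all be crossed by one external edge: simplicity only forbids adjacent edges from crossing each other, not from being crossed by a common edge. That edge is then heavy and the spokes are light, with no contradiction.

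\textbf{The missing idea: the ten paths are a witness, not the curve.}
\begin{enumerate}
\item Suppose each $K_4$ copy $H_j$ separates $u$ from $v$ in the subdrawing of $G[\{u,v\}\cup V(H_j)]$. This gives three edge-disjoint closed curves separating $u$ and $v$. Each of the ten paths crosses all three, so one of its two edges crosses at least two $K_4$ edges. That edge is heavy, so those $K_4$ edges are crossed by it alone. Different paths therefore need disjoint sets of such edges, i.e.\ $20$ edges, but there are only $3\binom{4}{2}=18$.
\item Hence some copy $H$ leaves $u$ and $v$ on a common face. Adding a crossing-free $uv$ curve to the drawing of $K_6-uv$ on $\{u,v\}\cup V(H)$ yields a simple min-1-planar drawing of $K_6$. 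Such a drawing is unique up to isomorphism, by Rafla's enumeration of simple drawings of $K_6$. This leaves exactly two configurations for that subdrawing.
\item The desired curve is then routed through crossing points of this drawing and along wires and $K_4$ edges. Examples are $u\to{}$crossing${}\to v$, or a $u$--$c$ wire followed by edge $cb$ and a $b$--$v$ wire. A finite case analysis, propagating heavy/light constraints and using the three wires per vertex, shows that not all such curves can meet external edges.
\end{enumerate}
Your Step~1 gestures at the $K_4$ blocks but never isolates a non-separating copy. It also never uses the rigidity of min-1-planar $K_6$, which is what makes that case analysis finite.
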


We fix $\Gamma$ as an arbitrary simple min-1-planar drawing of $G$ and show the existence of such a curve in $\Gamma$.
The proof consists of three steps and starts with the following proposition.

\begin{proposition}\label{prop:non-separating-K_4}
    Let $H_1$, $H_2$, and $H_3$ be distinct subgraphs of an uncrossable edge between $u$ and $v$, each of which corresponds to one of the three copies of $K_4$ added in step~\ref{item:add-three-K_4}.
    Then, there exists $H \in \{H_1, H_2, H_3\}$ such that, in the subdrawing of $\Gamma$ induced by $G[\{u,v\} \cup V(H)]$, $u$ and $v$ lie on the boundary of the same face.
\end{proposition}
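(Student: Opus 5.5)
Suppose, for contradiction, that for every $i \in \{1,2,3\}$ the vertices $u$ and $v$ do not lie on a common face of the subdrawing $\Gamma_i$ of $\Gamma$ induced by $G_i := G[\{u,v\} \cup V(H_i)]$. I would first identify what such a separation forces. The graph $G_i$ is $K_6$ minus the edge $uv$: in step~2 every vertex of $H_i$ is joined to $u$, and in step~4 to $v$. The edges of $G_i$ are pairwise distinct edges of a simple drawing. If $u$ and $v$ are not on a common face of $\Gamma_i$, then some closed curve made of pieces of edges of $\Gamma_i$ (edge segments between crossings) separates $u$ from $v$.

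Next, I would count crossings along the length-2 paths from $u$ to $v$. Each of the ten paths from step~\ref{item:add-ten-paths} is internally disjoint from $G_i$. Each such path must therefore cross this separating curve, and hence must cross an edge of $G_i$. The same holds for every $i$, so each of the ten $u$--$v$ paths crosses edges of all three subgraphs $G_1$, $G_2$, $G_3$. There are ten paths and each has two edges. So, summing over paths, at least $30$ crossings are distributed over $20$ path edges, and at least ten of the path edges are involved in at least two crossings. Those path edges are heavy, so whatever they cross must be light, that is, have at most one crossing. Thus for each $i$, the edges of $G_i$ crossed by heavy path edges are light.

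The key step is to show that the separating cycle of $\Gamma_i$ cannot be "cut" by many distinct curves while only using light edges. A light edge is crossed at most once, so it can be crossed by at most one of the ten paths. I would argue that the separating closed curve in $\Gamma_i$ can be chosen to consist of boundedly many edge pieces, and I would try to show the number of edges of $G_i$ involved is small. The drawing $\Gamma_i$ of $K_6 - uv$ has few edges (14). Hence the total number of path crossings with light edges of $G_i$ is at most $14$. I would also need to handle the crossings with heavy edges of $G_i$. A heavy edge of $G_i$ can only be crossed by light path edges, and each path has only two edges, each with at most one crossing if light. This suggests a pigeonhole: every path crosses all three $G_i$'s, so it needs three crossings. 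Hence at least one of its two edges carries at least two crossings, i.e., is heavy, and all crossings on that edge are with light edges. So across the three $G_i$'s, the ten paths consume at least ten distinct crossing slots of light edges of the $G_i$'s. Each light edge gives only one slot.

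The main obstacle is that $14 \cdot 3 = 42$ light edges give more slots than ten paths need, so naive counting fails. I expect to refine it using the spokes from steps~\ref{item:spokes-u} and~\ref{item:spokes-v}: the three length-2 paths from $u$ (and from $v$) to each $K_4$ vertex make those connections robust. Concretely, I would argue that an edge of $G_i$ lying on the separating curve and incident to $u$ or $v$ cannot be light and crossed by a path edge without also forcing crossings with these parallel spokes. This would show that the separating curve must essentially pass through crossings inside $H_i$, i.e., the $K_4$ itself (6 edges) must be drawn with crossings separating $u$ from $v$. Then the counting becomes at most $6$ slots per $H_i$ versus the per-path requirements. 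Combined with the facts that a $K_4$ separating $u$ from $v$ while both are adjacent to all its vertices must be drawn non-planarly, and that $K_4$ admits essentially one crossing configuration, I would aim for a contradiction. I am least sure about this step and would first try it by explicit case analysis of how $K_6 - uv$ can be drawn min-1-planarly with $u$ and $v$ separated.
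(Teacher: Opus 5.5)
Your setup is the paper's: three edge-disjoint separating curves, the ten $u$--$v$ paths of length $2$, and the heavy/light pigeonhole. However, the argument does not close, for two reasons.

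\textbf{The count is too weak.} A heavy path edge $e$ crosses at least two of the three separating curves. Since these lie in different, edge-disjoint $G_i$, $e$ crosses at least two \emph{distinct} edges. Each of these is light, and therefore crossed by $e$ alone. So every path privately owns at least two edges of $G_1\cup G_2\cup G_3$, and at least $20$ distinct edges are needed, not ten. With your count of ten, even the refinement you aim for ($6$ edges per $K_4$, $18$ in total) yields no contradiction. With $20$ it gives $20>3\binom{4}{2}=18$, which is exactly how the paper concludes.

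\textbf{The step you flag as uncertain is the crux.} The separating curve must consist of pieces of the six edges of $H_i$ alone; otherwise $42$ edges are available and no count works. The spokes of steps 3 and 5 play no role here; the paper uses them only later, in the case analysis of the two drawings of $K_6$ minus an edge. Also, your intended fact that a separating $K_4$ must be drawn with a crossing is false: a planar $K_4$ with $u$ and $v$ in two different triangular faces gives a simple, even min-1-planar, drawing.

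The paper simply asserts the confinement. It holds because if $u$ and $v$ lie in a common face $D$ of the subdrawing of $H_i$, they already share a face of $\Gamma_i$. The justification runs as follows.
\begin{itemize}
\item A simple drawing of $K_4$ has at most one crossing, so $D$ is a disk bounded by a triangle or a $4$-cycle.
\item An edge $ux_k$ may cross only the three $H_i$-edges avoiding $x_k$, each at most once. A short case check shows it meets $D$ in a single arc from $u$ to $\partial D$, and likewise for the edges at $v$.
\item The four pairwise non-crossing $u$-arcs split $D$ into four sectors arranged cyclically around $u$.
\item An arc of $vx_k$ cannot leave the sector $S\ni v$ and come back. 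That would require crossing all four $u$-arcs, including $ux_k$.
\item Hence, inside $S$, the $v$-arcs are non-crossing arcs from $v$ to $\partial S$, and no other edge enters $S$. So every face of $\Gamma_i$ inside $S$ is incident to $v$, in particular the one at the corner of $u$.
\end{itemize}
Contrapositively, if $u$ and $v$ share no face of $\Gamma_i$, the boundary of the face of $H_i$ containing $u$ is the required separating curve made only of $H_i$-edges.
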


\begin{proof}
    Let us suppose otherwise for a contradiction.
    Then, each $H \in \{H_1, H_2, H_3\}$ contains in $\Gamma$ a closed curve that separates $u$ and $v$, consisting only of (subcurves of) edges of $H$.
    Namely, there are three edge-disjoint closed curves that separate $u$ and $v$ in $\Gamma$.

    Now consider the ten paths between $u$ and $v$ added in step~\ref{item:add-ten-paths}.
    By the Jordan curve theorem, each path must cross the three closed curves in $\Gamma$, and since that path only has two edges, one of the two edges, say $e$, must cross those closed curves at least twice.
    By the min-1-planarity of $\Gamma$, the edges crossed by $e$ cannot involve any other crossing.
    Hence, there are at least two edges of $H_1$, $H_2$, and $H_3$ that are crossed only by $e$.
    Summing up the numbers of such edges for ten paths, we obtain $|E(H_1) \cup E(H_2) \cup E(H_3)| \geq 20$, contradicting $|E(H_1) \cup E(H_2) \cup E(H_3)| = 3 \cdot \binom{4}{2} = 18$.
\end{proof}

Let $H$ be a subgraph in $\{H_1, H_2, H_3\}$ satisfying the condition in \cref{prop:non-separating-K_4}.
In what follows, we show that there is a desired curve for \cref{lem:uncrossable-edge} between $u$ and $v$ through $H$ and subdivided paths.
To this end, we next enumerate the possible subdrawings $\Gamma'$ of $\Gamma$ induced by $G[\{u,v\} \cup V(H)]$.

\begin{proposition}\label{prop:gammap-two-isomorphism-class}
    Drawing $\Gamma'$ is isomorphic to one of the two drawings depicted in \cref{fig:two-min-1-planar-drawings}.
\end{proposition}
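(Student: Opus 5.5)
I would first pin down exactly what $\Gamma'$ contains. The graph $G[\{u,v\} \cup V(H)]$ consists only of the $K_4$ copy $H$ and the two vertices $u,v$, each joined by a direct edge to all four vertices of $H$. The three subdivided paths from steps~\ref{item:spokes-u} and~\ref{item:spokes-v}, and the ten $u$--$v$ paths, have their middle vertices outside this vertex set, so none of them belongs to the induced subgraph. Hence $\Gamma'$ is a simple min-1-planar drawing of $K_6$ minus the single edge $uv$.

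To classify it, I would use the fact that a simple drawing of a graph on few vertices is determined up to isomorphism by a finite amount of combinatorial data: the rotation system together with which pairs of edges cross, and in what order along each edge. The plan is to first restrict which crossing pairs can occur, and then check isomorphism classes within each weak-isomorphism class.

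The restrictions come from the structure of $\Gamma'$.
\begin{itemize}
\item $K_6 - uv$ is not planar (it contains $K_5$ on $V(H)\cup\{u\}$), so there is at least one crossing.
\item Heavy edges, i.e.\ those with at least two crossings, are pairwise noncrossing.
\item Light edges (at most one crossing) form a 1-planar subdrawing.
\item Adjacent edges never cross, so every crossing is between two independent edges.
\item The paper's standing condition, from \cref{prop:non-separating-K_4}, is that $u$ and $v$ lie on a common face of $\Gamma'$. I would use this to exclude configurations in which $H$ forms a closed curve separating $u$ from $v$.
\end{itemize}

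\textbf{Case split.} I would split on how $H$ itself is drawn.

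\emph{$H$ drawn planar.} The drawing of $H$ is a triangulation with four triangular faces. The vertex $u$ lies in one face and must reach the opposite vertex through an edge of the bounding triangle; the same holds for $v$. The common-face condition, together with min-1-planarity, should force $u$ and $v$ into the same face, each crossing the same (or a compatible) triangle edge. I would then analyse whether that crossed edge becomes heavy.

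\emph{$H$ drawn with its unique crossing.} The drawing of $H$ is a crossed quadrilateral and both crossing edges are light. The edges from $u$ and from $v$ can then cross only the uncrossed boundary edges. Counting the crossings each boundary edge receives, and checking which edges may be heavy, should leave a unique configuration.

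For each surviving weak-isomorphism class, I would argue isomorphism by tracing faces. In such small drawings, the planarization is a 3-connected plane graph, or nearly so, and therefore has a unique embedding on the sphere up to homeomorphism.

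\textbf{Main obstacle.} I expect the hardest step to be exhaustively excluding the other crossing patterns in the planar-$H$ case. The difficulty is that heavy edges may carry arbitrarily many crossings (up to the number of independent edges), so 1-planarity-style counting alone does not suffice. I would try an argument based on $K_5$: in any simple drawing of $K_5$ the number of crossings is odd, hence at least one. Applied to both $H+u$ and $H+v$, this gives a crossing inside each copy. A heavy/light parity argument, combined with the common-face condition, should then force the crossings to occur only where the two figures show them.
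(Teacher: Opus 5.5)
There is a genuine gap: the proposal never carries out the classification, and the case analysis you sketch rests on claims that are false. The exhaustive exclusion of other drawings is the whole content of the proposition. Your plan leaves it at ``should force'' and ``should leave a unique configuration''. The $K_5$ parity fact only gives at least one crossing in each of $H+u$ and $H+v$, which is far from determining the drawing.

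Worse, in the crossed-$H$ case you assert two things that do not hold:
\begin{itemize}
\item You claim both crossing edges of $H$ are light, but min-1-planarity only guarantees that one of them is.
\item You claim the edges at $u$ and $v$ can cross only the boundary edges of $H$. This overlooks crossings between an edge at $u$ and an edge at $v$, which is precisely what occurs.
\end{itemize}
In \cref{fig:two-min-1-planar-drawings:a} the crossings are $\{u,c\}\times\{a,v\}$, $\{u,d\}\times\{b,v\}$ and $\{a,d\}\times\{b,c\}$, and no boundary edge of $H$ is crossed at all. A count of crossings on boundary edges would therefore not even find this configuration. Likewise, \cref{fig:two-min-1-planar-drawings:b} contains the $u$--$v$ crossing $\{u,c\}\times\{a,v\}$. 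Your planar-$H$ sketch accounts neither for such crossings nor for a heavy edge at $u$ or $v$ crossing two or three edges of $H$.

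The missing idea is how to use the common-face hypothesis. You use it only to exclude a cycle of $H$ separating $u$ from $v$. That is strictly weaker, since an edge at $u$ crossing an edge at $v$ can also produce a separating closed curve.

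The paper instead uses the hypothesis to insert a crossing-free edge $\{u,v\}$, so that $\Gamma'$ extends to a simple min-1-planar drawing $\Gamma''$ of $K_6$. The problem then becomes classifying simple min-1-planar drawings of $K_6$, for which a finite enumeration exists. Among Rafla's 102 weak-isomorphism classes of simple drawings of $K_6$, only one is min-1-planar. That drawing is in fact 1-planar, so by Gioan's theorem $\Gamma''$ is determined up to isomorphism. Deleting the uncrossed edge $\{u,v\}$ then gives, up to symmetry, four candidates, which fall into the two classes of the figure.

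If you want to avoid the enumeration, you need a genuine hand case analysis of $K_6$ (or of $K_6$ minus an edge with $u,v$ on a common face). That analysis must track $u$--$v$ crossings and heavy edges explicitly. Your face-tracing/3-connectivity argument could serve as a substitute for Gioan's theorem in the last step, but only after the crossing pattern has been pinned down.
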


\begin{figure}[h]
    \begin{subfigure}[b]{.49\textwidth}
        \centering
        \includegraphics[page=2]{figure.pdf}
        \subcaption{}
        \label{fig:two-min-1-planar-drawings:a}
    \end{subfigure}
    \hfill
    \begin{subfigure}[b]{.49\textwidth}
        \centering
        \includegraphics[page=3]{figure.pdf}
        \subcaption{}
        \label{fig:two-min-1-planar-drawings:b}
    \end{subfigure}
    \caption{Two min-1-planar drawings of $G[\{u,v\} \cup V(H)]$ with labels for vertices and crossings.}
    \label{fig:two-min-1-planar-drawings}
\end{figure}

\begin{proof}
    As $H$ satisfies the condition in \cref{prop:non-separating-K_4}, it is possible to add a crossing-free curve between $u$ and $v$ to $\Gamma'$.
    This implies that $\Gamma'$ can be extended to a min-1-planar drawing $\Gamma''$ of $K_6$, and in turn, that $\Gamma'$ is a subdrawing of a min-1-planar drawing of $K_6$.
    Since $\Gamma$ is a simple drawing, $\Gamma''$ is also clearly a simple drawing.
    Hence, $\Gamma''$ is weakly isomorphic to one of the 102 simple drawings\footnote{referred to as good drawings in the paper~\cite{Rafla1988}.} of $K_6$ enumerated by Rafla~\cite{Rafla1988}.
    Among them one can verify that only the 77th drawing is min-1-planar, which is weakly isomorphic to the drawing depicted in \cref{fig:min-1-planar-drawing-of-K_6}.

    \begin{figure}[h]
        \centering
        \includegraphics[page=4]{figure.pdf}
        \caption{The only possible min-1-planar drawing of $K_6$ up to homeomorphism.}
        \label{fig:min-1-planar-drawing-of-K_6}
    \end{figure}

    Since every edge involves at most one crossing in \cref{fig:min-1-planar-drawing-of-K_6}, more strongly, $\Gamma''$ is isomorphic to this drawing~\cite[Theorem 3.10]{DBLP:journals/dcg/Gioan22}.
    We also remark that one can verify this with a computer program according to the techniques introduced by the authors in~\cite{abrego2015all} to enumerate all the nonisomorphic and non-weakly-isomorphic drawings of a complete graph.

    \begin{figure}[h]
        \begin{subfigure}[b]{.24\textwidth}
            \centering
            \includegraphics[page=5]{figure.pdf}
            \subcaption{}
            \label{fig:four-drawings-of-gammap:a}
        \end{subfigure}
        \hfill
        \begin{subfigure}[b]{.24\textwidth}
            \centering
            \includegraphics[page=6]{figure.pdf}
            \subcaption{}
            \label{fig:four-drawings-of-gammap:b}
        \end{subfigure}
        \hfill
        \begin{subfigure}[b]{.24\textwidth}
            \centering
            \includegraphics[page=7]{figure.pdf}
            \subcaption{}
            \label{fig:four-drawings-of-gammap:c}
        \end{subfigure}
        \hfill
        \begin{subfigure}[b]{.24\textwidth}
            \centering
            \includegraphics[page=8]{figure.pdf}
            \subcaption{}
            \label{fig:four-drawings-of-gammap:d}
        \end{subfigure}
        \caption{Drawing $\Gamma'$ is isomorphic to one of these four drawings.}
        \label{fig:four-drawings-of-gammap}
    \end{figure}

    As edge $\{u, v\}$ is crossing-free in $\Gamma''$, by a symmetric argument $\Gamma'$ is isomorphic to one of the four drawings depicted in \cref{fig:four-drawings-of-gammap}.
    One can confirm that
    \begin{itemize}
        \item the drawings depicted in \cref{fig:four-drawings-of-gammap:a,fig:four-drawings-of-gammap:d} are isomorphic to that of \cref{fig:two-min-1-planar-drawings:a}, and
        \item the drawings depicted in \cref{fig:four-drawings-of-gammap:b,fig:four-drawings-of-gammap:c} are isomorphic to that of \cref{fig:two-min-1-planar-drawings:b},
    \end{itemize}
    which completes the proof.
\end{proof}

Now we are ready to prove \cref{lem:uncrossable-edge}.
We assume for a contradiction that every curve between $u$ and $v$ crosses at least one \emph{external edge}, namely, an edge of $G$ that is not of the uncrossable edge.
This enables us to argue that, for example, the curve $u \rightarrow 1 \rightarrow v$ in \cref{fig:two-min-1-planar-drawings:a} crosses an external edge, and hence, at least one of the edges $\{u, c\}$ and $\{a, v\}$ crosses that external edge.
This then further adds constraints to $\Gamma$: the crossed edge, say $\{u, c\}$, involves at least two crossings and hence the other edge $\{a, v\}$ cannot involve any other crossing.
In this manner, we pick a curve between $u$ and $v$, branch on the choice of edge crossed by an external edge, and show that every case in the end leads to a contradiction.
To show a contradiction, we use the paths of length 2 added in steps~\ref{item:spokes-u}~and~\ref{item:spokes-v}, which we call \emph{wires}.
We do that for each of the two isomorphism classes in \cref{prop:gammap-two-isomorphism-class}, completing the proof for \cref{lem:uncrossable-edge}.

\begin{proposition}\label{prop:gammap-a}
    If $\Gamma'$ is isomorphic to the drawing depicted in \cref{fig:two-min-1-planar-drawings:a}, then $\Gamma$ contains a curve between $u$ and $v$ that does not cross an external edge.
\end{proposition}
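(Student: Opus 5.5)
We argue by contradiction, as set up just before the statement: assume every curve between $u$ and $v$ in $\Gamma$ crosses at least one external edge. In the drawing of \cref{fig:two-min-1-planar-drawings:a}, $H$ has vertices $a,b,c,d$ and a few crossings, labelled $1,2,\dots$. We first list a small family of candidate curves from $u$ to $v$ that run along edges of $\Gamma'$ and pass through crossing points or faces of $\Gamma'$; the curve $u \rightarrow 1 \rightarrow v$ is one of them. Each such curve is covered by at most two edge segments of $\Gamma'$, plus possibly the interior of a face.

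\emph{Step 1: branching.} For each candidate curve, an external edge $f$ crosses one of its segments; if the curve uses a face interior, $f$ may instead lie inside that face. Suppose $f$ crosses the segment of an edge $e$. Then $e$ already carries its crossing in $\Gamma'$ (for example, $\{u,c\}$ is crossed at $1$), so $\mathrm{cr}_\Gamma(e)\ge 2$. By min-1-planarity, every edge crossing $e$, in particular the partner edge at crossing $1$, has exactly one crossing. The same holds for the external edge $f$. Running over the candidate curves, these implications produce a case tree. Its leaves assert that certain edges of $\Gamma'$ (incident to $u$ or $v$) are \emph{saturated}, meaning they carry exactly one crossing, while certain others are heavy. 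We shrink the tree using the symmetry of \cref{fig:two-min-1-planar-drawings:a} that swaps $u$ and $v$ together with the matching vertices of $H$.

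\emph{Step 2: wires.} In each leaf we use the wires, i.e., the three length-2 paths from $u$ (or $v$) to each vertex of $H$. The goal is to find a vertex $x\in V(H)$ and the endpoint $w\in\{u,v\}$ such that the two cells containing $w$ and $x$ are separated in $\Gamma'$ only by a short closed curve. That curve should consist of saturated edges together with heavy edges whose crossing partners are all already saturated. Any wire $w$--$x$ has two edges and must cross this separating curve. Crossing a saturated edge is impossible. Crossing a heavy edge forces the wire edge to be light, yet it must then cross at least twice in total, which is again impossible. There are three wires per pair, and each wire edge that crosses twice saturates an edge of $H$, as in the counting of \cref{prop:non-separating-K_4}. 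So even when one crossing is absorbed, three wires exceed the available light capacity. This yields the contradiction.

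\emph{Main obstacle.} The hard part will be the bookkeeping of Step 2 across all branches. For each leaf we must pick the right pair $(w,x)$ and verify that every possible route of each of the three wires meets a closed curve made of edges forbidden from taking another crossing. Routes through face interiors of $\Gamma'$ are the delicate cases, because an external edge can also appear there. I would first handle the leaf where the external edge crosses an edge at $u$: there $v$'s incident edges become saturated, which should isolate $v$ from one vertex of $H$. I would then show that the remaining leaves reduce to this one by symmetry or by choosing a second candidate curve.
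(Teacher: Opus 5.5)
\textbf{The gap is in the mixed branch.} Your plan gets every contradiction from some wire being undrawable, and this cannot close all branches. From the curves $u \rightarrow 1 \rightarrow v$ and $u \rightarrow 3 \rightarrow v$ you get two essentially different leaves:
\begin{itemize}
\item the externally crossed edges are $\{u,c\}$ and $\{u,d\}$, both at $u$;
\item they are $\{u,c\}$ and $\{b,v\}$, one at each endpoint (the mixed leaf).
\end{itemize}
Any symmetry of the configuration fixes $\{u,v\}$, so it preserves whether the two crossed edges share an endpoint. Hence the mixed leaf does not reduce to the first one by symmetry.

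In the first leaf an undrawability argument does exist, but it concerns the wires at $u$, not an isolation of $v$. All six wires from $u$ to $c$ and $d$ must leave through $\{a,b\}$. They then cross $\{a,d\}$ or $\{b,c\}$, and this overloads crossing $2$.

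In the mixed leaf, add what the curve $u\rightarrow 3\rightarrow d\rightarrow 2\rightarrow a\rightarrow 1\rightarrow v$ gives: $\{a,d\}$ is heavy and $\{b,c\}$ is saturated. Even then, every wire has a legal route in which each wire edge has exactly one crossing:
\begin{itemize}
\item the $u$--$c$ wires cross $\{a,b\}$ and then $\{a,d\}$;
\item the $b$--$v$ wires cross $\{c,d\}$ and then $\{a,d\}$;
\item the $u$--$d$ wires cross only $\{b,v\}$;
\item the $v$--$a$ wires cross only $\{u,c\}$;
\item all remaining wires are crossing-free.
\end{itemize}
This works because the middle vertex of a wire can sit in the intermediate cell, so no wire edge has to cross twice. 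Thus no closed curve of forbidden edges and no capacity count yields a contradiction here. Taking further curves of $\Gamma'$ only makes more free edges of $H$ externally crossed, and these wire routes survive that.

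\textbf{The missing idea} is to apply the contradiction hypothesis to a $u$--$v$ curve that runs along wires. Such a curve is not in your family of curves along $\Gamma'$. In the mixed leaf, the paper shows that a $u$--$c$ wire is drawn in one of exactly two ways: crossing $\{a,b\}$ and $\{a,d\}$, or crossing $\{b,v\}$ and $\{d,v\}$. By pigeonhole, two of the three $u$--$c$ wires use the same route. Both $H$-edges on that route then have at least two crossings, so those wire edges must be light and cannot be crossed by an external edge. The same argument gives a clean $b$--$v$ wire. Concatenating the clean $u$--$c$ wire, the saturated edge $\{c,b\}$, and the clean $b$--$v$ wire gives a $u$--$v$ curve avoiding all external edges. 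The contradiction is therefore with the standing assumption, not with drawability.
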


\begin{proof}
    For readability, we redisplay \cref{fig:two-min-1-planar-drawings:a} in the following.

    \begin{figure}[h]
        \centering
        \includegraphics[page=2]{figure.pdf}
    \end{figure}

    We assume for a contradiction that every curve between $u$ and $v$ crosses at least one external edge in $\Gamma$.
    Then, as stated above, either $\{u, c\}$ or $\{a, v\}$ crosses an external edge.
    Without loss of generality, we assume that it is $\{u, c\}$.
    We also take the curve $u \rightarrow 3 \rightarrow v$ and branch on the choice between edges $\{u, d\}$ and $\{b, v\}$.

    \subparagraph{If edge $\{u, d\}$ crosses an external edge.}

    Then, edges $\{a, v\}$ and $\{b, v\}$ cannot involve any crossing other than crossings $1$ and $3$.
    Hence, the six wires from $u$ to $c$ or $d$ must cross edge $\{a, b\}$.
    Observe also that, taking the curve $a \rightarrow 2 \rightarrow b$, those six wires cross exactly one of edges $\{a, d\}$ and $\{b, c\}$.
    Without loss of generality, let us assume that it is $\{a, d\}$.
    As $\{a, b\}$ and $\{a, d\}$ involve at least six crossings, a wire between $u$ and $d$ cannot involve any other crossing.
    It is clear that such drawing is impossible and yields a contradiction.

    \subparagraph{If edge $\{b, v\}$ crosses an external edge.}

    Then, edges $\{a, v\}$ and $\{u, d\}$ cannot involve any crossing other than crossings $1$ and $3$.
    This implies that, taking the curve $u \rightarrow 3 \rightarrow d \rightarrow 2 \rightarrow a \rightarrow 1 \rightarrow v$, edge $\{a, d\}$ is crossed by an external edge.
    Hence, also $\{b, c\}$ cannot involve any crossing other than $2$.
    Recalling that $\Gamma$ is a simple drawing, one can observe that a wire between $u$ and $c$ is drawn as one of the two ways depicted in \cref{fig:gammap-a:bv-cross:wires:uc}.
    By symmetry a wire between $b$ and $v$ is also drawn as one of the two ways depicted in \cref{fig:gammap-a:bv-cross:wires:bv}.

    \begin{figure}[h]
        \begin{subfigure}[b]{.49\textwidth}
            \centering
            \includegraphics[page=9]{figure.pdf}
            \subcaption{}
            \label{fig:gammap-a:bv-cross:wires:uc}
        \end{subfigure}
        \hfill
        \begin{subfigure}[b]{.49\textwidth}
            \centering
            \includegraphics[page=10]{figure.pdf}
            \subcaption{}
            \label{fig:gammap-a:bv-cross:wires:bv}
        \end{subfigure}
        \caption{The possible ways of drawing wires (a) between $u$ and $c$ and (b) between $b$ and $v$. Black edges are already crossed by an edge that involves more than one crossing in our assumption and cannot involve any other crossing.}
        \label{fig:gammap-a:bv-cross:wires}
    \end{figure}

    Since there are three wires between $u$ and $c$, there are at least two wires of the same type, either (1) or (2).
    If there are two wires drawn as (1), edges $\{a, b\}$ and $\{a, d\}$ involve at least two crossings, and hence, those wires cannot involve any other crossing.
    The same argument can be applied to (2) with edges $\{b, v\}$ and $\{d, v\}$, which implies that either way there is a wire between $u$ and $c$ that does not cross an external edge.
    By symmetry there is also such a wire between $b$ and $v$.
    Hence, the curve between $u$ and $v$ consisting of such a wire between $u$ and $c$, edge $\{c, b\}$, and such a wire between $b$ and $v$ does not cross an external edge, which yields a contradiction.
\end{proof}

\begin{proposition}
    If $\Gamma'$ is isomorphic to the drawing depicted in \cref{fig:two-min-1-planar-drawings:b}, then $\Gamma$ contains a curve between $u$ and $v$ that does not cross an external edge.
\end{proposition}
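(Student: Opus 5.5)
We will follow the same scheme as in the proof of \cref{prop:gammap-a}. For contradiction, assume that every curve between $u$ and $v$ in $\Gamma$ crosses at least one external edge. We then pick curves through the crossings of the drawing in \cref{fig:two-min-1-planar-drawings:b} and branch on which edge is crossed by an external edge.

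First we fix the local structure. In \cref{fig:two-min-1-planar-drawings:b}, $u$ and $v$ lie on a common face of $\Gamma'$. We first read off the short curves $u \rightarrow i \rightarrow v$ passing through a single crossing $i$ between an edge at $u$ and an edge at $v$. For each such curve, one of the two half-edges involved must meet an external edge. That edge then has at least two crossings, so by min-1-planarity its partner at crossing $i$ has no crossing other than $i$. Using the symmetry of the drawing that swaps $u$ and $v$ (and the labels of $H$ accordingly), we may assume without loss of generality which edge is crossed at the first curve. We then branch on the second curve, exactly as in \cref{prop:gammap-a}.

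In each branch, the propagated constraints mark a set of edges of $\Gamma'$ as \emph{saturated}, meaning they cannot be crossed again. There are two ways to derive a contradiction.
\begin{itemize}
\item \textbf{Counting argument.} If the saturated edges separate $u$ from some vertex $x\in V(H)$ except through a few edges of $H$, then the three wires from $u$ to $x$ (or from $v$ to $x$) must all cross the same one or two edges of $H$. Those edges then carry at least two crossings, which forces the wires themselves to be crossing-free elsewhere. This contradicts either planarity of the resulting configuration or the count of available edges, as in the first case of \cref{prop:gammap-a}.
\item \textbf{Wire routing.} Otherwise we use the simplicity of $\Gamma$ to enumerate the few ways a wire between $u$ and some $x \in V(H)$ can be routed, as in \cref{fig:gammap-a:bv-cross:wires}. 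By pigeonhole, two of the three wires share a type, so the edges they cross have at least two crossings and those wires are crossing-free otherwise. We then assemble a curve from $u$ to $v$ consisting of a clean wire from $u$ to some $x$, a path inside $H$ (along an edge or through a saturated crossing), and a clean wire from some $y$ to $v$. This curve avoids external edges, a contradiction.
\end{itemize}

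The main obstacle is the case analysis itself. Drawing (b) has a different arrangement of crossings than (a), possibly with more crossings on edges incident to $u$ and $v$. Choosing the right sequence of curves, so that every branch closes quickly, requires care. My first attempt will be to pick curves through crossings adjacent to both $u$ and $v$ and maximise the number of saturated edges early on. I will then route the remaining argument through the vertex of $H$ whose incident edges become saturated, verifying each wire-routing enumeration by simplicity.
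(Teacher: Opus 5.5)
\textbf{Gap.} What you have written is a strategy template, not a proof. The entire content of this proposition is the case analysis, and you explicitly leave it open. The one concrete step you commit to also fails in drawing (b). There, exactly one crossing pairs an edge at $u$ with an edge at $v$: crossing $3$, between $\{u,c\}$ and $\{a,v\}$. The other two crossings are between $\{u,d\}$ and $\{b,a\}$ (crossing $1$) and between $\{b,c\}$ and $\{d,v\}$ (crossing $2$). Your first step is fine: the curve $u \rightarrow 3 \rightarrow v$ and the symmetry $u \leftrightarrow v$, $a \leftrightarrow c$ let you assume $\{u,c\}$ is hit, so $\{a,v\}$ is saturated. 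But there is no second single-crossing curve to branch on ``exactly as in \cref{prop:gammap-a}''. The missing idea is to route the second curve through the saturated edge. The curve $u \rightarrow 1 \rightarrow a \rightarrow 3 \rightarrow v$ must meet an external edge, and since $\{a,v\}$ cannot, the branch is between $\{u,d\}$ and $\{b,a\}$.

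Neither branch then closes by your two generic mechanisms as you stated them.

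If $\{u,d\}$ is hit, then $\{b,a\}$ and $\{a,v\}$ are saturated. This forces all six wires from $u$ to $c$ and $d$ across $\{b,v\}$, an edge outside $H$. Next, some $u$--$d$ wire must cross $\{b,c\}$. Otherwise all three $u$--$d$ wires cross $\{a,c\}$, $\{u,c\}$ and $\{c,v\}$. These edges are then heavy, so each wire edge can cross only one of them, which a path of length $2$ cannot manage. Symmetrically, some $u$--$c$ wire crosses $\{d,v\}$. The contradiction is that both edges of the existing crossing $2$ become heavy. This is a violation of min-1-planarity at a crossing of $\Gamma'$, not a planarity or edge-count contradiction.

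If $\{b,a\}$ is hit, then $\{u,d\}$ is saturated. The path $u,d,v$ then forces an external edge onto $\{d,v\}$, which saturates $\{b,c\}$. Only at this point does the simplicity-based enumeration, with pigeonhole over the three $d$--$v$ wires, produce a clean wire. The final curve is the edge $\{u,d\}$ followed by that wire, not wire--$H$--wire.

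Until you specify these curves and closures, the statement is not proved.
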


\begin{proof}
    For readability, we redisplay \cref{fig:two-min-1-planar-drawings:b} in the following.

    \begin{figure}[h]
        \centering
        \includegraphics[page=3]{figure.pdf}
    \end{figure}

    We assume for a contradiction that every curve between $u$ and $v$ crosses at least one \emph{external edge} in $\Gamma$.
    Similarly to the proof of \cref{prop:gammap-a}, without loss of generality we assume that edge $\{u, c\}$ is crossed by an external edge.
    Hence, edge $\{a, v\}$ cannot involve any crossing other than $3$.
    We again take the curve $u \rightarrow 1 \rightarrow a \rightarrow 3 \rightarrow v$ and branch on the choice between edges $\{u, d\}$ and $\{b, a\}$, as edge $\{a, v\}$ cannot be crossed by an external edge.

    \subparagraph{If edge $\{u, d\}$ crosses an external edge.}

    Then, edge $\{b, a\}$ cannot involve any crossing other than $1$.
    Together with edge $\{a, v\}$ this implies that the six wires from $u$ to $d$ or $c$ must cross edge $\{b, v\}$.
    Now we claim that there is a wire between $u$ and $d$ that crosses edge $\{b, c\}$.
    To see this, suppose otherwise.
    Then, all three wires between $u$ and $d$ must cross three edges $\{a, c\}$, $\{u, c\}$ and $\{c, v\}$.
    Since those three edges involve more than one crossing, the wires must have at least three edges.
    In a similar manner, we can also show the existence of a wire between $u$ and $c$ that crosses edge $\{d, v\}$.
    This leads to a contradiction that both edges of crossing pair $(\{b, c\}, \{d, v\})$ involve more than one crossing.

    \subparagraph{If edge $\{b, a\}$ crosses an external edge.}

    Then, edge $\{u, d\}$ cannot involve any crossing other than $1$.
    Taking the curve $u \rightarrow 1 \rightarrow d \rightarrow 2 \rightarrow v$, this also implies that edge $\{d, v\}$ crosses an external edge and hence edge $\{b, c\}$ cannot involve any crossing other than $2$.
    Recalling that $\Gamma$ is a simple drawing, one can now observe that a wire between $d$ and $v$ is drawn as one of the two ways depicted in \cref{fig:gammap-b:ba-cross:wires:dv}.
    Note that a wire between $d$ and $v$ cannot be drawn in such a way going through the curves between $1$ and $a$, between $u$ and $a$, and between $u$ and $3$.
    As edges $\{b, a\}$ and $\{u, c\}$ already involve a crossing, they would require such a wire to involve at most two crossings.

    \begin{figure}[h]
        \centering
        \includegraphics[page=11]{figure.pdf}
        \caption{The two possible ways of drawing wires between $d$ and $v$. Black edges are already crossed by an edge that involves more than one crossing in our assumption and cannot involve any other crossing.}
        \label{fig:gammap-b:ba-cross:wires:dv}
    \end{figure}

    As there are three wires between $d$ and $v$, there are at least two wires of the same type, either (1) or (2).
    If there are two wires drawn as (1), due to edges $\{b, a\}$ and $\{u, b\}$ involving more than one crossing, such wires cannot involve a crossing with an external edge.
    The same argument also holds for (2).
    Hence, either way a curve between $u$ and $v$ consisting of edge $\{u, d\}$ and such a wire between $d$ and $v$ does not cross an external edge, which yields a contradiction.
\end{proof}

\section{Reduction}\label{sec:reduction}

In this section, we give a reduction that proves \cref{thm:main} using uncrossable edges in \cref{sec:uncrossable-edge}.
Our reduction basically follows the idea of Grigoriev and Bodlaender~\cite{DBLP:journals/algorithmica/GrigorievB07} to show the NP-hardness of recognizing 1-planar graphs.
However, we make some changes in order to properly handle min-1-planar drawings.

\ThmMain*

\begin{proof}
    The problem is clearly in NP, as one can use (combinatorial information of) a simple min-1-planar drawing as a certificate.
    Observe that the number of crossings in a simple drawing of a graph $G$ is $O(|E(G)|^2)$ as two edges cannot cross twice.

    Our reduction is from \ThreePartition{}.
    Let $n$ be an integer, $X$ be a set of $3n$ distinct positive integers $\{x_1, \dots, x_{3n}\}$, and $T = \left(\sum_{1 \leq i \leq 3n} x_i\right) / n$.
    The problem \ThreePartition{} asks, given $n$ and $X$, to determine if the set $X$ can be partitioned into $n$ triplets such that each of them has a sum of exactly $T$.
    It is well-known that this problem is strongly NP-hard even if $T/4 < x_i < T/2$ holds for every $1 \leq i \leq 3n$~\cite{DBLP:journals/orl/HulettWW08}.
    Hence, we assume the integers in $X$, and therefore $T$, to be polynomial in $n$ and that $T/4 < x_i < T/2$ holds for every $i$.

    \subparagraph{Construction.}

    Let $\langle n, X\rangle$ be an instance of \ThreePartition{} and let $T = \left(\sum_{1 \leq i \leq 3n} x_i\right) / n$.
    From this instance, we construct an instance $\langle G = (V, E) \rangle$ of \MinOnePlanarity{} as follows (see \cref{fig:whole-reduction}), where vertices $a_{n+1}$, $b_{n+1}$, $c_{n+1}$, $d_{i,0}$, and $d_{i,T}$ denote $a_1$, $b_1$, $c_1$, $c_i$, and $c_{i+1}$, respectively:
    \begin{enumerate}
        \item let $V = \{s, t\} \cup \{a_i, b_i, c_i \mid 1 \leq i \leq n\} \cup \{u_i \mid 1 \leq i \leq 3n\} \cup \{d_{i,j} \mid 1 \leq i \leq n, 1 \leq j \leq T-1\}$;
        \item let $E$ $=$ $\{\{s, u_i\} \mid 1 \leq i \leq 3n\}$ $\cup$ $\{\{a_i, a_{i+1}\}, \{b_i, b_{i+1}\}, \{c_i, c_{i+1}\} \mid 1 \leq i \leq n\}$ $\cup$ $\{\{d_{i,j-1}, d_{i,j}\} \mid 1 \leq i \leq n, 1 \leq j \leq T\}$;
        \item for each $1 \leq i \leq 3n$, add $x_i$ paths of length two between $u_i$ and $t$;
        \item\label{item:add-a-path-between-dij-1-dij} for each $1 \leq i \leq n$ and $1 \leq j \leq T$, add a path of length two between $d_{i,j-1}$ and $d_{i,j}$;
        \item for each $1 \leq i \leq n$, add uncrossable edges between $\{s, a_i\}$, $\{a_i, b_i\}$, $\{b_i, c_i\}$, and $\{c_i, t\}$;
        \item for each $1 \leq i \leq n$ and $1 \leq j \leq T-1$, add an uncrossable edge between $\{d_{i,j}, t\}$.
    \end{enumerate}

    \begin{figure}[h]
        \centering
        \includegraphics[page=12]{figure.pdf}
        \caption{An overview of the graph $G$ we construct. The black edges are uncrossable edges.}
        \label{fig:whole-reduction}
    \end{figure}

    \subparagraph{Completeness.}

    \begin{figure}[h]
        \centering
        \includegraphics[page=13]{figure.pdf}
        \caption{The essentially unique drawing of the $n$ disjoint paths between $s$ and $t$ consisting of uncrossable edges.}
        \label{fig:n-uncrossable-st-path}
    \end{figure}

    Suppose that the set $X$ admits a partition into $n$ triplets, $\mathcal{P} = (P_1, \dots, P_n)$.
    We show that then there is a min-1-planar drawing of $G$.
    Let us first draw the induced subgraph consisting of $s$, $t$, $a_i$'s, $b_i$'s, $c_i$'s, and the uncrossable edges between them as \cref{fig:n-uncrossable-st-path}.
    We draw an uncrossable edge so thinly that we can effectively treat it as an edge.
    Internally we draw it in a min-1-planar manner according to \cref{fig:uncrossable-edge}, contracting vertices $u$ and $v$ in the figure into two points.
    Those $n$ paths make $n$ faces in the drawing, treating the uncrossable edges as edges.
    For each $i$, we draw the following vertices and related edges as \cref{fig:content-inside-a-face} in the face with $a_i$ and $a_{i+1}$ on its boundary:
    \begin{itemize}
        \item if $P_i = \{x_{j_1}, x_{j_2}, x_{j_3}\}$, vertices $u_{j_1}, u_{j_2}, u_{j_3}$ and the middle vertices on the paths between them and $t$;
        \item vertex $d_{i,j}$ for every $1 \leq j \leq T-1$ and the middle vertices added in step~\ref{item:add-a-path-between-dij-1-dij} connecting $d_{i,j-1}$ and $d_{i,j}$ for every $1 \leq j \leq T$.
    \end{itemize}
    \begin{figure}[h]
        \centering
        \includegraphics[page=14]{figure.pdf}
        \caption{The content to be drawn inside the face with $a_i$ and $a_{i+1}$ on its boundary. This figure corresponds to the case when $P_i = \{x_{j_1} = 3, x_{j_2} = 1, x_{j_3} = 4\}$ and $T = 8$. The black edges are uncrossable edges.}
        \label{fig:content-inside-a-face}
    \end{figure}
    Note that for the case $i = n$ we in fact draw them in the exterior.
    Since $x_{j_1} + x_{j_2} + x_{j_3} = T$, we can assign the $T$ edges between $d_{i,j-1}$ and $d_{i,j}$ for $1 \leq j \leq T$ to the $T$ paths between $t$ and vertices $u_{j_1}, u_{j_2}, u_{j_3}$ bijectively.
    Hence, in this manner we can obtain a min-1-planar drawing of $G$.

    \subparagraph{Soundness.}

    Suppose that graph $G$ admits a min-1-planar drawing.
    We show that then the instance $\langle n, X \rangle$ is a yes-instance.
    To this end, we begin with taking a min-1-planar drawing of $G$ where the uncrossable edges are indeed essentially crossing-free.

    \begin{claim}\label{claim:drawing-s.t.-uncrossable-edges-are-uncrossed}
        There exists a simple min-1-planar drawing of $G$ such that every uncrossable edge does not cross an external edge.
    \end{claim}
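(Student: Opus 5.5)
We start from an arbitrary simple min-1-planar drawing $\Gamma$ of $G$, which exists by \cref{lem:simple-min-1-planar}. We then redraw the uncrossable edges one at a time, keeping the drawing simple and min-1-planar and never adding crossings between an uncrossable edge and an external edge. We use the number of crossings between uncrossable edges and external edges as a potential that strictly decreases with each redrawing.

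Fix an uncrossable edge $U$ between $u$ and $v$ whose edges are crossed by some external edge. By \cref{lem:uncrossable-edge}, there is a curve $\gamma$ from $u$ to $v$ that crosses no edge of $G$ outside $U$. We delete all internal vertices and edges of $U$, that is, the three copies of $K_4$, their wires and attaching edges, and the ten length-2 paths. We then redraw $U$ inside a thin neighbourhood of $\gamma$ according to \cref{fig:uncrossable-edge}, squeezing $u$ and $v$ into two points as in the completeness part. The neighbourhood meets no external edge except near its endpoints at $u$ and $v$. Since the rotations at $u$ and $v$ can be chosen so that the new edges leave $u$ and $v$ within the sector where $\gamma$ leaves, the new copy of $U$ crosses nothing outside $U$.

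Next we check that the redrawn drawing is still simple and min-1-planar. Deleting edges only lowers the crossing numbers $\mathrm{cr}(e)$ of the remaining edges. Any crossing that survived among external edges therefore still has a partner with at most one crossing, because that partner's crossing count did not increase. The only new crossings lie inside $U$, and they are min-1-planar by construction. Simplicity holds because the fresh copy is a simple drawing inside a disk that the rest of the drawing meets only at $u$ and $v$.

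We then have to make sure that earlier fixes are not destroyed. Redrawing one $U$ deletes only edges belonging to $U$. Previously fixed uncrossable edges keep their drawing and pick up no new crossings, so the potential strictly decreases. We also note that distinct uncrossable edges share no internal vertices or edges. So "external edge" for $U$ includes the edges of other gadgets, and $\gamma$ avoids those as well.

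The main obstacle is justifying that $\gamma$ can be thickened to a region free of external edges near $u$ and $v$. We first need to perturb $\gamma$ so it passes through no vertex except its endpoints. Near $u$ it must lie within one angular sector between consecutive external edges at $u$, and likewise at $v$. Curves are arbitrary here, but if $\gamma$ passes through other vertices we can detour around them in a small circle, which crosses only edges incident to that vertex. That is a problem, because those could be external edges. To avoid this, we would take the curve from \cref{lem:uncrossable-edge} in the arrangement of $\Gamma$ restricted to $U$, moving through faces, crossings, and vertices of $U$ only. Vertices of $U$ other than $u$ and $v$ are incident only to edges of $U$, so detours around them stay clean.
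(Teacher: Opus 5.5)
Your proposal is correct and follows the paper's proof: take a simple min-1-planar drawing via \cref{lem:simple-min-1-planar}, use \cref{lem:uncrossable-edge} to get a $u$--$v$ curve avoiding external edges for each uncrossable edge, and redraw the gadget thinly along that curve as in \cref{fig:uncrossable-edge}. The only difference is organizational: the paper redraws all gadgets at once, using that the curves of pairwise edge-disjoint gadgets do not cross, while you redraw them one at a time and re-apply the lemma to each intermediate simple min-1-planar drawing. Your remarks on crossing counts only decreasing, on simplicity, and on thickening $\gamma$ away from external vertices make explicit what the paper leaves implicit.
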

    \begin{claimproof}
        Let $\Gamma$ be a simple min-1-planar drawing of $G$, whose existence is guaranteed by \cref{lem:simple-min-1-planar}.
        By \cref{lem:uncrossable-edge}, for every uncrossable edge, say between $u$ and $v$, $\Gamma$ contains a curve between $u$ and $v$ that does not cross an external edge.
        Since the uncrossable edges in $G$ are pairwise edge-disjoint, those curves for uncrossable edges do not cross.
        Hence, we can obtain a desired drawing by redrawing the uncrossable edges along those curves in a min-1-planar manner as \cref{fig:uncrossable-edge}.
    \end{claimproof}

    Let $\Gamma$ be a drawing satisfying the condition of \cref{claim:drawing-s.t.-uncrossable-edges-are-uncrossed}.
    In the following we treat the uncrossable edges just as crossing-free (actual) edges.
    Then, observe that the subdrawing of $\Gamma$ consisting of uncrossable edges $\{s, a_i\}$, $\{a_i, b_i\}$, $\{b_i, c_i\}$, $\{c_i, t\}$ for all $1 \leq i \leq n$ is isomorphic to the drawing depicted in \cref{fig:n-uncrossable-st-path}, since uncrossable edges are crossing-free and there is an edge $\{a_i, a_{i+1}\}$ for every $1 \leq i \leq n$.
    Similarly to the above, each of the $n$ faces made by those uncrossable edges corresponds to each set of a partition of $X$.
    Namely, we define $P_i$ to be the set of integers $x_j$ such that $u_j$ is on the face with vertices $a_i$ and $a_{i+1}$ on its boundary and claim that $\mathcal{P} = \{P_1, \dots, P_n\}$ obtained in this manner is a solution for the original instance $\langle n, X \rangle$.
    In what follows, we show that each face ensures that the corresponding set satisfies the condition for \ThreePartition{}.

    \begin{figure}[h]
        \centering
        \includegraphics[page=15]{figure.pdf}
        \caption{Part of the drawing, inside the face with $a_i$ and $a_{i+1}$ on its boundary, fixed by uncrossable edges (black) and related edges (orange).}
        \label{fig:filters-in-a-face}
    \end{figure}

    Let us first show that a set $P_i$ defined as above has a sum of at most $T$ if $|P_i| \geq 3$.
    This corresponds to the face with $a_i$ and $a_{i+1}$ on its boundary.
    Observe that edges $\{a_i, a_{i+1}\}$, $\{b_i, b_{i+1}\}$, and $\{d_{i, j-1}, d_{i,j}\}$ for $1 \leq j \leq T$ are drawn as \cref{fig:filters-in-a-face} inside the face, due to the simplicity of $\Gamma$ and uncrossable edges.
    In $\Gamma$, there are $\sum_{x \in P_i} x$ paths from $s$ to $t$ drawn in this face, namely of the form $(s, u_j, w, t)$, where $u_j$ satisfies $x_j \in P_i$ and $w$ is one of the vertices connecting $u_j$ and $t$.
    Among them there are at least three edge-disjoint paths since $|P_i| \geq 3$.
    Hence, edges $\{a_i, a_{i+1}\}$ and $\{b_i, b_{i+1}\}$ involve more than one crossing and a path of the form $(s, u_j, w, t)$ must cross them using edges $\{s, u_j\}$ and $\{u_j, w\}$, respectively, implying that those edges cannot involve any further crossing.
    Observe also that each of the $\sum_{x \in P_i} x$ paths crosses edge $\{d_{i, j-1}, d_{i,j}\}$ for at least one $1 \leq j \leq T$, as those edges form a path between $c_i$ and $c_{i+1}$ inside the face.
    Hence, the $\sum_{x \in P_i} x$ paths cross edges $\{d_{i, j-1}, d_{i,j}\}$ for $1 \leq j \leq T$ with their last edges.
    Now it suffices to show that no two of the last edges cross the same edge $\{d_{i, j-1}, d_{i,j}\}$ for some $j$.
    Suppose for a contradiction such two last edges.
    Then, those two edges both cross the other path between $d_{i, j-1}$ and $d_{i,j}$, of length two.
    This is because the first two edges in the paths cannot cross them.
    However, this implies that the two last edges and $\{d_{i, j-1}, d_{i,j}\}$ all involve at least two crossings, which contradicts the min-1-planarity of $\Gamma$.

    Now let us consider a set $P_i$ with $|P_i| < 3$.
    Then, there exists a set $P_j$ with $|P_j| > 3$ since there are $3n$ integers in $X$.
    By the above discussion, $P_j$ has a sum of at most $T$, which is impossible as we assumed $x > T/4$ for every $x \in X$.
    Hence, in fact there is no such $P_i$ and every set $P \in \mathcal{P}$ contains exactly three integers with sum exactly $T$, which completes the proof.
\end{proof}

\section{Conclusions}\label{sec:conclusions}

In this paper, we showed the NP-hardness of \MinOnePlanarity{}.
A natural next open question is the complexity of recognizing min-$k$-planar graphs for fixed $k \geq 2$.
We expect all these problems to be NP-hard, and some ideas in this paper could be used to show this.
However, for every fixed $k \geq 2$, it is known that there is a graph that admits a min-$k$-planar drawing but no simple min-$k$-planar drawing~\cite{DBLP:conf/gd/HlinenyK24}, while our proof heavily uses the fact that every min-1-planar graph admits a simple min-1-planar drawing.
Hence, one may have to come up with a new idea to handle non-simple min-$k$-planar drawings.
It might be also worth considering the simple variants, namely, testing if a graph admits a simple min-$k$-planar drawing.

Another direction would be to revisit the complexity of recognizing quasi-planar graphs.
One can easily observe the following by definition.
\begin{observation}[{\cite[Lemma 3]{DBLP:conf/gd/BinucciBBDDHKLMT23},\cite{DBLP:conf/gd/KlemzKRS21,DBLP:journals/jgaa/KlemzKRS23}}]\label{obs:fan-and-min-1-planar-are-quasi-planar}
    Min-1-planar drawings and simple fan-planar drawings are both quasi-planar drawings.
\end{observation}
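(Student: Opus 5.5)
The Observation has two parts, and I would prove each by contradiction. In both, assume the drawing contains three pairwise crossing edges $e$, $f$, and $g$, and derive a contradiction directly from the definition, using no earlier result of the paper beyond the definitions in \cref{sec:preliminaries}.

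\emph{Min-1-planar drawings.} Let $\Gamma$ be min-1-planar and suppose $e,f,g$ pairwise cross. Edge $e$ crosses both $f$ and $g$. Since $f \neq g$, these are two distinct crossings, so $\mathrm{cr}_\Gamma(e) \geq 2$. By the same reasoning, $\mathrm{cr}_\Gamma(f) \geq 2$ and $\mathrm{cr}_\Gamma(g) \geq 2$. Now look at the crossing between $e$ and $f$. Both edges involved have more than one crossing, which violates the defining condition of a min-1-planar drawing. So no such triple exists, and $\Gamma$ is quasi-planar.

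\emph{Simple fan-planar drawings.} Let $\Gamma$ be a simple fan-planar drawing and suppose $e,f,g$ pairwise cross. The edges crossing $e$ include $f$ and $g$. By fan-planarity, all edges crossing $e$ share a common endpoint, so $f$ and $g$ are adjacent. But $f$ and $g$ also cross each other. This contradicts condition (2) of simplicity, which forbids adjacent edges from crossing.

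Neither part presents a real obstacle. The one point needing care is that the second argument genuinely relies on simplicity, specifically that adjacent edges do not cross. This explains why the Observation is stated only for \emph{simple} fan-planar drawings, and I would remark on that. For the first argument, I would point out that it needs no simplicity assumption at all, apart from counting crossings with distinct edges as distinct crossings.
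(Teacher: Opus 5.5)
Your proof is correct and matches the paper. The paper gives no proof beyond noting that the observation follows ``by definition'' (citing Binucci et al.\ and Klemz et al.), and your two contradiction arguments spell out exactly that. Your side remarks are also accurate: simplicity is genuinely needed in the fan-planar case, and the min-1-planar case relies on counting a crossing with each distinct edge separately.
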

Hence, our hardness result can be interpreted as adding another NP-hard recognition problem for a subclass of quasi-planar graphs.
Additionally, in general the following relation holds.
\begin{observation}[{\cite[Lemma 3]{DBLP:conf/gd/BinucciBBDDHKLMT23}}]
    For every $k \geq 1$, min-$k$-planar drawings are $(k+2)$-quasi-planar drawings, namely, there are no $k+2$ pairwise crossing edges.
\end{observation}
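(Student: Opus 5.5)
The plan is a direct counting argument on a hypothetical set of pairwise crossing edges; no earlier lemma is needed beyond the definition of min-$k$-planarity from \cref{sec:preliminaries}. Fix $k \geq 1$ and a min-$k$-planar drawing $\Gamma$ of a graph $G$. Suppose, for a contradiction, that $\Gamma$ contains $k+2$ edges $e_1, \dots, e_{k+2}$ that pairwise cross.

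The first step is to lower-bound $\mathrm{cr}_{\Gamma}(e_i)$ for each $i$. By assumption, $e_i$ crosses each of the other $k+1$ edges $e_j$ with $j \neq i$ at least once. Each of these is a distinct crossing involving $e_i$, since the partners are distinct edges. Hence $\mathrm{cr}_{\Gamma}(e_i) \geq k+1 > k$ for every $1 \leq i \leq k+2$. If $\Gamma$ is not simple, the definition of $\mathrm{cr}_{\Gamma}$ counts crossings with multiplicity, so extra crossings can only increase the count and the bound still holds. In the terminology of Binucci et al., every $e_i$ is heavy.

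The second step derives the contradiction. Since $k+2 \geq 3 \geq 2$, the edges $e_1$ and $e_2$ exist and cross at some point. By the definition of a min-$k$-planar drawing, this crossing requires $\mathrm{cr}_{\Gamma}(e_1) \leq k$ or $\mathrm{cr}_{\Gamma}(e_2) \leq k$, which contradicts the first step. Therefore no $k+2$ pairwise crossing edges exist, and $\Gamma$ is $(k+2)$-quasi-planar. For $k=1$ this gives the min-1-planar half of \cref{obs:fan-and-min-1-planar-are-quasi-planar}.

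There is no real obstacle here. The only point that needs care is that ``pairwise crossing'' gives $k+1$ distinct crossing partners for each edge. This is what pushes each crossing count above $k$, regardless of whether the drawing is simple.
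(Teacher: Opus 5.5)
Your argument is correct: the $k+1$ distinct crossing partners give $\mathrm{cr}_{\Gamma}(e_i) \geq k+1$ for every $i$, so the crossing between $e_1$ and $e_2$ violates the definition of min-$k$-planarity. The paper gives no proof of its own here and simply cites Lemma~3 of Binucci et al.\ as following from the definition; your counting argument is exactly that reasoning.
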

Therefore, resolving the complexity of recognizing min-$k$-planar graphs for further $k > 1$ would also help us to understand the structure of ($k$-)quasi-planar graphs better.

\bibliography{main}

\end{document}